\journal{Reconfigurable Computing}
\begin{document}

\begin{frontmatter}
\title{Cyclic Sequence Generators as Program Counters for High-Speed FPGA-based
Processors}
\author{P. A. Suggate}
\author{R. W. Ward}
\author{T. C. A. Molteno}
\ead{tim@physics.otago.ac.nz}
\address{Department of Physics, University of Otago, P.O. Box 56, Dunedin, New Zealand}

\begin{abstract}

This paper compares the performance of conventional radix-2 program counters with
program counters based on Feedback Shift Registers (FSRs), a class of cyclic
sequence generator. FSR counters have constant time scaling with bit-width, $N$,
whereas FPGA-based radix-2 counters typically have $O(N)$ time-complexity due to
the carry-chain. Program counter performance is measured by synthesis of
standalone counter circuits, as well as synthesis of three FPGA-based processor
designs modified to incorporate FSR program counters. Hybrid counters, combining
both an FSR and a radix-2 counter, are presented as a solution to the potential
cache-coherency issues of FSR program counters. Results show that high-speed
processor designs benefit more from FSR program counters, allowing both greater
operating frequency and the use of fewer logic resources.

\end{abstract}

\begin{keyword}
FPGA-Processor \sep FSR \sep LFSR \sep Feedback Shift Register \sep Cyclic
 Sequence Generator \sep Program Counter
\end{keyword}

\end{frontmatter}

\tableofcontents

\section{Introduction}

A Program Counter (PC) circuit~\cite{parhami2005cam} generates the address of the
next instruction to be fetched for execution. The greatest contributor to total
PC circuit latency in an FPGA-based processor can be due to the counter that is
used to increment the current PC value. This is because conventional radix-2
counters implemented within FPGAs can have long carry chains. Pipelining can be
used to obtain a higher operating frequency but also increases logic usage,
and likely increases the branch penalty.

Non-radix-2 cyclic sequence generators can be used to generate the next
instruction address, for example a maximum-cycle Feedback Shift Register
(FSR)~\cite{golomb1981srs, robshaw1995sc, menezes1997ac}, and can lead to a
reduction in total PC latency. This is because FSR counters can be designed where
the maximum depth of combinatorial logic required is only one
gate~\cite{high_performance_ring_generators} therefore the PC latency is constant
with bit-width  $N$. We explore how this reduction in PC circuit complexity can
affect maximum processor operating frequency for three FPGA processor designs.

The sequence of instruction addresses generated by a PC circuit using a
maximum-cycle FSR is pseudo-random. For a processor fetching instructions from a
small embedded memory this presents no problem. For processors that feature an
instruction cache an FSR PC will have poor cache-coherency behaviour. As as
solution we present a hybrid PC architecture. The hybid-PC is the concatenation
of two smaller counters. The hybrid PC uses a small radix-2 counter to step
through instructions within a cache line, and a FSR counter that cycles between
cache line. When implemented within FPGA processors, this hybrid approach has low
latency and avoids cache-coherency problems.

\section{Program Counters}
\label{PCs}

Stan et al.~\cite{stan1998laf} list the properties of generic up/down-counters
but not all of these properties are required for PC circuits. A program counter
must be \verb+RESET+able, increment once every clock cycle as long as an
\verb+ENABLE+ line is asserted, and sometimes has its value changed by branch
instructions (so it needs to be \verb+LOAD+able, and have \verb+IN+ lines). The
value also needs to be readable every clock cycle (using the \verb+OUT+ lines) in
order to access the memory address to fetch the next instruction. There is no
need to support other common counter features~\cite{stan1998laf}, such as being
reversible, or any terminal count operations. A black box diagram for a generic
PC circuit is shown in Figure~\ref{pc_black_box}.

\begin{figure}[ht!]
\begin{center}
\includegraphics[width=0.5\linewidth]{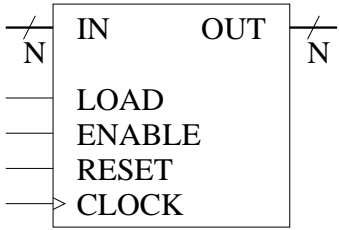}
\caption{Black box program counter model showing the required control signals.
The LOAD, ENABLE, and RESET actions all occur on the positive edge of the clock
(CLOCK).}
\label{pc_black_box}
\end{center}
\end{figure}

% % TODO: merge this in
% \subsection{PC Lit review}
% 
% Constant time arbitrary length synchronous binary counters~\cite{vuillemin1991constant}. 1991 MUST READ paper. Everyone cites this.
% 
% Designing for High Speed-Performance in cplds and fpgas~\cite{zilic1995designing}. 1995 paper of some interest. Discusses (and dismisses) LFSRs, and cites Viuillemin.
% 
% Synchronous up/down binary counter for LUT FPGAs with counting frequency independent of counter size~\cite{tenca1997synchronous}. Takes some previous work and talks about it specifically with FPGAs (1997). Important.
% 
% On circuits and numbers\cite{vuillemin1994circuits}. Using $_2\mathbb{Z}$ for a number system. Probably irrelevant but fun. Tim will like this.
% 
% These are all counter references. We need to find something specifc to program counters.

\section{Formal framework for counters}

In order to investigate structures that can be used for program counters, it is worth setting up some formalism.

A conventionial radix-2 counter can take on a range of values from 0 to $2^n-1$ where $n$ is the number of bits. The next value of the counter is obtained by incrementing the current value using radix-2 arithmetic.
% (TODO: FIXME: difference between radix-2 and $n$ length cycle) 
We will show there is a isomorphism between this counter architecture and a family of other counter architectures. We consider only finite counters, as all counters implemented in digital logic will have a finite number of states.

We will start by defining a counter as finite and closed. Closed is not necessary for a program counter (all we need is a sufficent sequence of states to put the program in), but give us some nice properties.

We will then set up a `ordinary' counter $\mathbf{C}_n$ which is a cycle of length $n$. We will use isomorphism to this.

Lemma~\ref{lemma_cycle_full} is a useful grab bag of properties. Note that from Lemma~\ref{lemma_cycle_full}(4,5), any state can be used as a generator.

Lemma~\ref{lemma_iso} states that all $n$-cyclic counters are isomorphic to $\mathbf{C}_n$ (no suprises).

Corollary~\ref{corollary_cyclic_sub} states that all counters have a $n$-cyclic subcounter form some $n$. Just find the limit cycle. Note that the limit cycle may have only one element in it.

The central theorem is Theorem~\ref{counter_central_theorem}, that all counters have a sub-counter that is isomorphic to $\mathbf{C}_n$ for some $n$. This ties these new counters to counters with known behaviour.

% TODO - some concept of maximum cycle.

Theorem~\ref{counter_concat} lets us join counters together to form hybrid counters.

\begin{figure}
\begin{center}
\psfrag{Possible Program Counters}{\Large Possible Program Counters}
\psfrag{Radix-2 Counters}{\Large Radix-2 Counters}
\psfrag{Maximum Cycle FSRs}{\Large Maximum Cycle FSRs}
\psfrag{PC=Cn}{\Large$\text{PC}\cong C_n$}
\psfrag{PC=C2m}{\Large$\text{PC}\cong C_{2^m}$}
\psfrag{PC=C2m-1}{\Large$\text{PC}\cong C_{2^m-1}$}
\includegraphics[width=\linewidth]{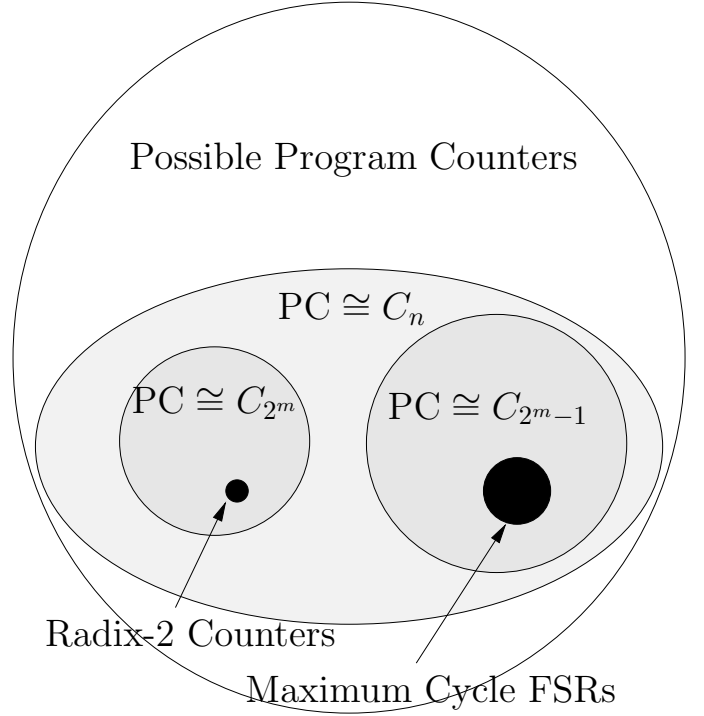}
\caption{Types of Program Counter}
\label{isomorphism}
\end{center}
\end{figure}

\newtheorem{theorem}{Theorem}[section]
\newtheorem{lemma}[theorem]{Lemma}
\newtheorem{proposition}[theorem]{Proposition}
\newtheorem{corollary}[theorem]{Corollary}

\newenvironment{proof}[1][Proof]{\begin{trivlist}
\item[\hskip \labelsep {\bfseries #1}]}{\end{trivlist}}
\newenvironment{definition}[1][Definition]{\begin{trivlist}
\item[\hskip \labelsep {\bfseries #1}]}{\end{trivlist}}
\newenvironment{example}[1][Example]{\begin{trivlist}
\item[\hskip \labelsep {\bfseries #1}]}{\end{trivlist}}
\newenvironment{remark}[1][Remark]{\begin{trivlist}
\item[\hskip \labelsep {\bfseries #1}]}{\end{trivlist}}

\begin{definition}
Consider a finite set of states $\mathbb{S}$ and an increment operator $\sigma:\mathbb{S}\mapsto \mathbb{S}$. For the purposes of this paper we define a counter as a pair $(\mathbb{S},\sigma)$ with the following property:

\begin{align}
& \forall s \in \mathbb{S}, \sigma(s) \in \mathbb{S}
\label{completeness_axiom}
\end{align}
\end{definition}

This (closure) is one of Peano's axioms for the natural numbers~\cite{peano} -- we don't need the other four axioms for our purposes.

A useful special case is a modulo arithmetic counter, with the ordinary increment operator.

\begin{definition}
Let $\mathbf{C}_n$ be the counter $(\mathbb{Z}_n,f_n)$ where $f_n(x)=x+1 \text{ mod } n$.
\end{definition}

The conventional radix-2 counter with $m$ bits is now simply $\mathbf{C}_{2^m}$.

\begin{definition}
We also define $\sigma^n(s)$ as $n$ successive applications of $\sigma$, so
\begin{align}
& \sigma^0(s)=s \\
& \forall n>0, \sigma^n(s)=\sigma(\sigma^{n-1}(s))
\end{align}
\end{definition}

\begin{definition}
Given a counter $(\mathbb{S},\sigma)$, $(\mathbb{T},\sigma)$ is a subcounter of $(\mathbb{S},\sigma)$ if $(\mathbb{T},\sigma)$ is a counter and $\mathbb{T}\subseteq\mathbb{S}$.
\end{definition}

\begin{definition}
A counter $(\mathbb{S},\sigma)$ is $n$-cyclic if $|\mathbb{S}|=n$, $\exists s_0\in\mathbb{S}, \mathbb{S}=\{\sigma^i(s_0)|i\geq 0\}$, and $\exists m>0, \sigma^m(s_0)=s_0$. $s_0$ is a \emph{generator} for the counter.
\end{definition}

\begin{lemma}
Given an $n$-cyclic counter $(\mathbb{S},\sigma)$, then for any element $s\in\mathbb{S}$
\begin{enumerate}
\item $\forall t\in\mathbb{S}, \exists i\geq0, \sigma^i(s)=t$,
\item $\sigma^i(s)\neq s$ for $0<i<n$,
\item $\forall i,j, 0\leq i,j<n, i\neq j\Rightarrow \sigma^i(s)\neq\sigma^j(s)$,
\item $\{\sigma^i(s)|0\leq i<n\}=\{\sigma^i(s)|i\geq 0\}=\mathbb{S}$,
\item $\sigma^n(s)=s$.
\end{enumerate}
\label{lemma_cycle_full}
\end{lemma}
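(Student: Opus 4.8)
The plan is to first pin down the behaviour of the distinguished generator $s_0$ supplied by the definition of an $n$-cyclic counter, and only then transfer every claim to an arbitrary state $s$. The three hypotheses we start from are $|\mathbb{S}|=n$, $\mathbb{S}=\{\sigma^i(s_0)\mid i\geq 0\}$, and $\sigma^m(s_0)=s_0$ for some $m>0$. The one structural feature we must respect throughout is that $\sigma$ need not be invertible, so no step may simply \emph{cancel} a $\sigma$ from both sides of an equation.

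First I would let $p$ be the least positive integer with $\sigma^p(s_0)=s_0$, which exists because the set of such integers is nonempty (it contains $m$). The key internal step is to show the iterates $s_0,\sigma(s_0),\dots,\sigma^{p-1}(s_0)$ are pairwise distinct: if $\sigma^i(s_0)=\sigma^j(s_0)$ with $0\leq i<j<p$, then applying $\sigma^{\,p-j}$ to both sides and using $\sigma^p(s_0)=s_0$ yields $\sigma^{\,p-(j-i)}(s_0)=s_0$ with $0<p-(j-i)<p$, contradicting minimality of $p$. This detour through $\sigma^{\,p-j}$, rather than an inverse, is the one genuinely delicate move, and I expect it to be the main obstacle. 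A division-with-remainder argument ($i=qp+r$, reduced using $\sigma^p(s_0)=s_0$) then shows every iterate of $s_0$ coincides with one of these $p$ distinct states, so $\mathbb{S}$ has exactly $p$ elements; comparing with $|\mathbb{S}|=n$ forces $p=n$. At this point $s_0$ satisfies all five claims, and in particular $\sigma^n(s_0)=s_0$ with the first $n$ iterates exhausting $\mathbb{S}$.

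Next I would fix an arbitrary $s\in\mathbb{S}$ and write $s=\sigma^k(s_0)$ for some $0\leq k<n$, which is possible by the generating property. Claim~(5) comes out immediately, since $\sigma^n(s)=\sigma^n(\sigma^k(s_0))=\sigma^k(\sigma^n(s_0))=\sigma^k(s_0)=s$. For claims~(1),~(2) and~(3) the strategy is identical: rewrite each iterate $\sigma^i(s)$ as $\sigma^{i+k}(s_0)$, reduce the exponent modulo $n$ using $\sigma^n(s_0)=s_0$, and then appeal to the distinctness of the first $n$ iterates of $s_0$ established above. Concretely, for~(1) choose $i=(\ell-k)\bmod n$ where $t=\sigma^{\ell}(s_0)$; for~(2) and~(3), any equality $\sigma^{i+k}(s_0)=\sigma^{j+k}(s_0)$ forces $i\equiv j\pmod{n}$, which is impossible under the stated index ranges.

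Finally claim~(4) follows by counting: claim~(3) shows $\{\sigma^i(s)\mid 0\leq i<n\}$ consists of $n$ distinct elements, all lying in $\mathbb{S}$ by closure, so it equals $\mathbb{S}$ since $|\mathbb{S}|=n$; and claim~(5) lets any $\sigma^i(s)$ with $i\geq n$ be folded back into this range, giving the second equality.
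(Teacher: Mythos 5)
Your proof is correct, but it takes a genuinely different route from the paper's. You front-load all the work onto the distinguished generator $s_0$: you introduce the \emph{minimal} period $p$, show the first $p$ iterates of $s_0$ are pairwise distinct (the step where you apply $\sigma^{p-j}$ to both sides rather than cancelling is exactly right, and is the same trick the paper uses elsewhere to cope with non-invertibility of $\sigma$), count against $|\mathbb{S}|=n$ to force $p=n$, and then transfer every claim to an arbitrary $s=\sigma^k(s_0)$ by reducing exponents modulo $n$. The paper never introduces a minimal period at all: it proves the five claims in order, directly for an arbitrary $s$, working only with the given (not necessarily minimal) period $m$. Its claim (1) uses the generating property together with a multiple $mx$ of $m$ large enough to make exponents nonnegative; its claim (2) is the subtle step, proved by a pigeonhole/closure argument --- if $\sigma^i(s)=s$ with $i<n$, then $\{\sigma^j(s)\mid 0\leq j<i\}$ is a proper subset of $\mathbb{S}$ closed under $\sigma$, so some state is unreachable from $s$, contradicting (1); claims (3)--(5) then cascade from (2). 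The trade-off: your argument concentrates the difficulty in one lemma about the generator's orbit, after which everything is mechanical modular arithmetic --- this is the standard ``orbit of a periodic point'' argument and is arguably cleaner; the paper's argument avoids privileging $s_0$ beyond claim (1) and derives each claim for arbitrary $s$ intrinsically, at the cost of a trickier counting argument in (2) and a somewhat more tangled dependency chain. Both proofs are sound and both respect the constraint that $\sigma$ has no inverse.
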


\begin{proof}
For (1), $s=\sigma^j(s_0)$ and  $t=\sigma^k(s_0)$. If $j\leq k$, $t=\sigma^k(s_0)=\sigma^{k-j}(\sigma^j(s_0))=\sigma^{k-j}(s)$. If $j>k$, we know that 
$\exists m>0, \sigma^m(s_0)=s_0$, so choose $x$ such that $mx>j-k$. $\sigma^{mx}(s_0)=s_0$, so $t=\sigma^k(s_0)=\sigma^{k+mx}(s_0)=\sigma^{k+mx-j}(\sigma^j(s_0))=\sigma^{k+mx-j}(s)$.

For (2), assume the contrary. If $\sigma^i(s)=s$ for some $0<i<n$, then consider the set of states $\mathbb{T}=\{\sigma^j(s)|0\leq j<i\}$. For any $t\in \mathbb{T}$, $t=\sigma^j(s)$ for some $0\leq j<i$. If $j<i-1$, then $\sigma(t)=\sigma(\sigma^j(s))=\sigma^{j+1}(s)\in\mathbb{T}$. If $j=i-1$, $\sigma(t)=\sigma(\sigma^j(s))=\sigma^i(s)=s\in\mathbb{T}$. Hence $\mathbb{T}$ is closed under $\sigma$. However $|\mathbb{S}|=n$ and $|\mathbb{T}|=i<n$, so $\exists u\in\mathbb{S}\setminus\mathbb{T}$. However, $\mathbb{T}$ is closed under $\sigma$, so $\nexists p\geq 0, \sigma^p(s)=u$, contradicting (1). Hence $\sigma^m(s)\neq s$ for $0<i<n$.

For (3), If $\sigma^i(s)=\sigma^j(s)$ for some $0\leq i<j<n$ (assuming w.l.o.g. that $i<j$), then $\sigma^i(s)=\sigma^j(s)=\sigma^{j-i}(\sigma^i(s))$ where $j-i<n$, which violates (2).

For (4), $\sigma^i(s)$ where $0\leq i<n$ are elements of $\mathbb{S}$ and from (3) these values are distinct, so $|\{\sigma^i(s)|0\leq i<n\}|=n=|\mathbb{S}|$. $\exists j, s=\sigma^j(s_0)$ so $\{\sigma^i(s)|0\leq i<n\}=\{\sigma^i(\sigma^j(s_0))|0\leq i<n\}=\{\sigma^i(s_0)|j\leq i<n+j\}\subseteq\mathbb{S}$ (since \{$\mathbb{S},\sigma\}$ is closed) hence $\{\sigma^i(s)|0\leq i<n\}=\mathbb{S}$. Also, $\mathbb{S}=\{\sigma^i(s)|0\leq i<n\}\subseteq|\{\sigma^i(s)|0\leq i\}=|\{\sigma^i(s_0)|j\leq i\}\subseteq\mathbb{S}$, so $\{\sigma^i(s)|i\geq 0\}=\mathbb{S}$.

For (5), we use (4) to give $\sigma^n(s)=\sigma^i(s)$ for some $0\leq i<n$. Now $\sigma^n(s)=\sigma^{n-i}(\sigma^i(s))=\sigma^{n-i}(\sigma^n(s))$. If $n-i<n$ then (2) is contradicted, and $i \geq 0$, so $n-i=n$ and $i=0$. Hence $\sigma^n(s)=\sigma^0(s)=s$.
\end{proof}

\begin{lemma}
$\mathbf{C}_n$ is $n$-cyclic.
\label{cn_cyclic}
\end{lemma}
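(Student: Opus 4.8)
The plan is to verify the definition of $n$-cyclic directly against $\mathbf{C}_n = (\mathbb{Z}_n, f_n)$, since the claim is elementary and the definition has exactly three conditions to check. Recall that an $n$-cyclic counter must have $|\mathbb{S}| = n$, must possess a generator $s_0$ such that $\mathbb{S} = \{\sigma^i(s_0) \mid i \geq 0\}$, and must satisfy $\sigma^m(s_0) = s_0$ for some $m > 0$. For $\mathbf{C}_n$ the state set is $\mathbb{Z}_n = \{0, 1, \dots, n-1\}$, so the cardinality condition $|\mathbb{Z}_n| = n$ is immediate.

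First I would establish the key arithmetic fact that $f_n^i(x) = x + i \bmod n$, which follows by a trivial induction on $i$ using the definition of $\sigma^n$: the base case $f_n^0(x) = x$ holds, and $f_n^{i+1}(x) = f_n(f_n^i(x)) = (x+i) + 1 \bmod n = x + (i+1) \bmod n$. With this in hand, the remaining two conditions fall out cleanly. Choosing $s_0 = 0$ as the candidate generator, I would show $\{f_n^i(0) \mid i \geq 0\} = \{i \bmod n \mid i \geq 0\} = \mathbb{Z}_n$, where the final equality holds because every residue class $0, 1, \dots, n-1$ is attained (e.g. by $i = 0, 1, \dots, n-1$ respectively) and no value outside $\mathbb{Z}_n$ can arise. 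For the periodicity condition, taking $m = n > 0$ gives $f_n^n(0) = 0 + n \bmod n = 0 = s_0$, as required.

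Since all three defining conditions are satisfied, $\mathbf{C}_n$ is $n$-cyclic with generator $0$. I do not anticipate any genuine obstacle here: the lemma is a routine consistency check confirming that the concrete modulo-arithmetic counter is an instance of the abstract $n$-cyclic definition, and it serves mainly to license the later use of Lemma~\ref{lemma_cycle_full} for $\mathbf{C}_n$. The only point requiring minor care is the mild edge behaviour when $n = 1$, where $\mathbb{Z}_1 = \{0\}$, $f_1(0) = 0$, and the conditions hold with $m = 1$; the argument above already covers this case without modification.
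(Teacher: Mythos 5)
Your proof is correct and takes essentially the same approach as the paper: choose $s_0 = 0$ as generator and verify the three conditions of the $n$-cyclic definition, with period $m = n$. The paper's version is just terser, omitting the explicit induction for $f_n^i(x) = x + i \bmod n$ that you spell out.
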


\begin{proof}
Just let $s_0=0$. $|\mathbb{Z}_n|=n$, $\mathbb{Z}_n=\{f_n^i(x)|i\geq 0\}$ and $f_n^n(0)=0$.
\end{proof}

\begin{lemma}
If $(\mathbb{S},\sigma)\cong(\mathbb{T},\tau)$, then $(\mathbb{S},\sigma)$ is $n$-cyclic iff $(\mathbb{T},\tau)$ is $n$-cyclic.
\label{iso_cyclic}
\end{lemma}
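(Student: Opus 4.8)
The plan is to work directly from the definition of a counter isomorphism, which I take to be a bijection $\phi:\mathbb{S}\to\mathbb{T}$ satisfying the intertwining relation $\phi\circ\sigma=\tau\circ\phi$. Since the inverse $\phi^{-1}$ of such a map is again an isomorphism, from $(\mathbb{T},\tau)$ back to $(\mathbb{S},\sigma)$, it suffices to prove a single implication: if $(\mathbb{S},\sigma)$ is $n$-cyclic then $(\mathbb{T},\tau)$ is $n$-cyclic. The reverse implication then follows by applying this statement to $\phi^{-1}$, which yields the desired ``iff''.

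The one technical ingredient I would establish first is that the intertwining relation lifts to all iterates, that is $\phi(\sigma^i(s))=\tau^i(\phi(s))$ for every $s\in\mathbb{S}$ and every $i\geq 0$. This is a routine induction on $i$: the base case $i=0$ is immediate since both sides equal $\phi(s)$, and the inductive step is $\phi(\sigma^{i+1}(s))=\phi(\sigma(\sigma^i(s)))=\tau(\phi(\sigma^i(s)))=\tau(\tau^i(\phi(s)))=\tau^{i+1}(\phi(s))$, applying the inductive hypothesis once and the intertwining relation once.

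With this in hand, I verify the three defining clauses of $n$-cyclicity for $(\mathbb{T},\tau)$. For the cardinality, $\phi$ being a bijection gives $|\mathbb{T}|=|\mathbb{S}|=n$. For the generator and period, let $s_0$ be a generator of $(\mathbb{S},\sigma)$ and set $t_0=\phi(s_0)$. The iterate identity gives $\{\tau^i(t_0)|i\geq 0\}=\{\phi(\sigma^i(s_0))|i\geq 0\}=\phi(\{\sigma^i(s_0)|i\geq 0\})=\phi(\mathbb{S})=\mathbb{T}$, where the third equality is the generator property for $\mathbb{S}$ and the last is surjectivity of $\phi$; hence $t_0$ generates $\mathbb{T}$. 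Finally, if $\sigma^m(s_0)=s_0$ with $m>0$, then $\tau^m(t_0)=\phi(\sigma^m(s_0))=\phi(s_0)=t_0$ with the same $m$, which supplies the required period.

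I expect no genuine obstacle here; the only point requiring any care is making the iterate-commutation lemma explicit before manipulating the image sets, since both the generator and the period conditions are phrased in terms of the iterated operator. The symmetry argument via $\phi^{-1}$ is what lets me avoid proving the two directions separately.
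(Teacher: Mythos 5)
Your proposal is correct and follows essentially the same route as the paper's own proof: an induction establishing $\phi(\sigma^i(s))=\tau^i(\phi(s))$, bijectivity for the cardinality, transport of the generator and period through $\phi$, and the reverse implication via $\phi^{-1}$. If anything, your write-up is tidier than the paper's (which has several typographical slips in exactly these steps), but the mathematical content is identical.
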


\begin{proof}
Assume $(\mathbb{S},\sigma)$ is $n$-cyclic. $(\mathbb{S},\sigma)\cong(\mathbb{T},\tau)$ so there exists a mapping $g:\mathbb{S}\mapsto\mathbb{T}$ such that $\forall s\in\mathbb{S},g(\sigma(\mathbb{s}))=\tau(g(\mathbb{s}))$ that is bijective. If $i\geq0$, $g(\sigma^{i+1}(\mathbb{S}))=g(\sigma(\sigma^i(\mathbb{S})))=\tau(g(\sigma^i(\mathbb{S})))$, hence it can easily be seen by induction that  $\forall i\geq0,g(\sigma^i(\mathbb{s}))=\tau^i(g(\mathbb{s}))$.

$|\mathbb{T}|=|\mathbb{S}|=n$ (from the bijection of $g$). Given $s_0\in\mathbb{S}, \mathbb{S}=\{\sigma^i(s_0)|i\geq 0\}$, then $\{\tau^i(g(s_0))|i\geq 0\}=\{g(\sigma(s_0))|i\geq 0\}=\{g{s}|s\in\mathbb{S}\}=\mathbb{T}$, since $g$ is bijective. $\tau^m(g(s_0))=g(\sigma^m(s_0))=g(s_0)$ (from the definition of $n$-cyclic), so $(\mathbb{T},\tau)$.

If $(\mathbb{T},\tau)$ is $n$-cyclic, the proof that $(\mathbb{S},\sigma)$ is $n$-cyclic follows from using the same argument with $g^{-1}$.
\end{proof}

\begin{lemma}
A counter is $n$-cyclic iff it is isomorphic to $\mathbf{C}_n$. In particular, if $(\mathbb{S},\sigma)$ is $n$-cyclic, $g:\mathbb{Z}_n\mapsto\mathbb{S}$ as $g(i)=\sigma^i(s_0)$ is an isomorphism for any $s_0\in\mathbb{S}$.
\label{lemma_iso}
\end{lemma}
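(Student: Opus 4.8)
The plan is to prove Lemma~\ref{lemma_iso} ("a counter is $n$-cyclic iff it is isomorphic to $\mathbf{C}_n$") by chaining together the machinery already built up in the excerpt. The statement has two directions, but one of them is essentially free: Lemma~\ref{cn_cyclic} tells us $\mathbf{C}_n$ is $n$-cyclic, and Lemma~\ref{iso_cyclic} says $n$-cyclicity is preserved under isomorphism, so if a counter is isomorphic to $\mathbf{C}_n$ it is immediately $n$-cyclic. Hence the real content is the forward direction: given an $n$-cyclic counter $(\mathbb{S},\sigma)$, I must exhibit an explicit isomorphism with $\mathbf{C}_n$. The lemma kindly hands me the candidate map, $g:\mathbb{Z}_n\mapsto\mathbb{S}$ defined by $g(i)=\sigma^i(s_0)$ for a generator $s_0$, so my job is to verify that $g$ is well-defined, bijective, and a homomorphism intertwining $f_n$ with $\sigma$.

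Let me sketch how I would verify each property, leaning heavily on Lemma~\ref{lemma_cycle_full}. For well-definedness and injectivity I would invoke Lemma~\ref{lemma_cycle_full}(3): the elements $\sigma^i(s_0)$ for $0\leq i<n$ are pairwise distinct, so distinct residues in $\mathbb{Z}_n$ map to distinct states. For surjectivity I would use Lemma~\ref{lemma_cycle_full}(4), which states $\{\sigma^i(s_0)\mid 0\leq i<n\}=\mathbb{S}$; combined with $|\mathbb{Z}_n|=|\mathbb{S}|=n$ this gives a bijection. The homomorphism property is the step requiring the most care, and it is where the modular reduction in $f_n$ interacts with the cyclic structure of $\sigma$. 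I need
\begin{align}
g(f_n(i))=\sigma(g(i)) \quad \text{for all } i\in\mathbb{Z}_n.
\end{align}
For $0\leq i<n-1$ this is immediate since $f_n(i)=i+1$ and $g(i+1)=\sigma^{i+1}(s_0)=\sigma(\sigma^i(s_0))=\sigma(g(i))$. The only subtle case is $i=n-1$, where $f_n(n-1)=0$, so I must check $g(0)=\sigma(g(n-1))$, i.e. $s_0=\sigma(\sigma^{n-1}(s_0))=\sigma^n(s_0)$. This is exactly the content of Lemma~\ref{lemma_cycle_full}(5), which guarantees $\sigma^n(s_0)=s_0$.

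The main obstacle I anticipate is purely the wrap-around case $i=n-1$: it is the single place where the proof could fail if $\sigma^n(s_0)\neq s_0$, and it is precisely why the groundwork of Lemma~\ref{lemma_cycle_full} was needed. Everything else is bookkeeping that follows from distinctness and cardinality. One mild care point is that $g$ is defined on residues $i\in\mathbb{Z}_n$ but written via integer exponents $\sigma^i$; I would note that the representative $0\leq i<n$ is the canonical one, so no ambiguity arises, and the homomorphism check above already confirms consistency with the modular structure. Once $g$ is shown bijective and structure-preserving, it is by definition an isomorphism $\mathbf{C}_n\cong(\mathbb{S},\sigma)$, completing the forward direction and hence the lemma.
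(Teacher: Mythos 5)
Your proposal is correct and takes essentially the same route as the paper's own proof: the same map $g(i)=\sigma^i(s_0)$, injectivity from Lemma~\ref{lemma_cycle_full}(3), surjectivity from Lemma~\ref{lemma_cycle_full}(4), the wrap-around case $i=n-1$ settled by Lemma~\ref{lemma_cycle_full}(5), and the converse direction dispatched via Lemmata~\ref{cn_cyclic} and~\ref{iso_cyclic}. The only cosmetic difference is that you describe $s_0$ as ``a generator'' whereas the lemma asserts the map works for \emph{any} $s_0\in\mathbb{S}$; since Lemma~\ref{lemma_cycle_full} applies to every element of an $n$-cyclic counter, your verification covers that case verbatim.
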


\begin{figure}
\begin{center}
\psfrag{S}{$\mathbb{Z}_n$}
\psfrag{T}{$\mathbb{S}$}
\psfrag{s}{$\!\!f_n$}
\psfrag{t}{$\sigma$}
\psfrag{g}{g}
\psfrag{s0}{$j$}
\includegraphics[width=0.5\linewidth]{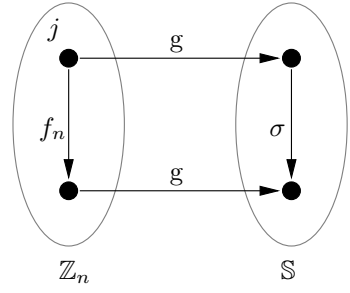}
\caption{Homomorphism between $C_n$ and $(\mathbb{S},\sigma)$ as used in the proof for Lemma~\ref{lemma_iso}}
\label{homomorphism}
\end{center}
\end{figure}

\begin{proof}
Firstly we show that of a counter $n$-cyclic, then it is is isomorphic to $\mathbf{C}_n$. We construct an a mapping $g$ and show it is an isomorphism. Given an $n$-cyclic counter $(\mathbb{S},\sigma)$, pick any element $s_0\in\mathbb{S}$. Define the mapping $g:\mathbb{Z}_n\mapsto\mathbb{S}$ as $g(i)=\sigma^i(s_0)$. Given any $j$ such that $0\leq j<n$, there are two cases. If $j<n-1$ then $g(f_n(j))=g(j+1)=\sigma^{j+1}(s_0)=\sigma(\sigma^j(s_0))=\sigma(g(j))$. If $j=n-1$ then $g(f_n(j))=g(0)=s_0=\sigma^n(s_0) \text{ (using lemma~\ref{lemma_cycle_full}(5)) }=\sigma(\sigma^{n-1}(s_0))=\sigma(g(j))$. Hence $g$ is a homomorphism as shown in Figure~\ref{homomorphism}. From lemma~\ref{lemma_cycle_full}(3) $g$ is an injection, and from lemma~\ref{lemma_cycle_full}(4) $g$ is a surjection, hence $g$ is an isomorphism, and $(\mathbb{S},\sigma)\cong\mathbf{C}_n$.

Going the other way, if a counter is isomorphic to $\mathbf{C}_n$, it is $n$-cyclic from lemmata~\ref{cn_cyclic} and \ref{iso_cyclic}.
\end{proof}

\begin{theorem}[Generator]
Given a counter $(\mathbb{S},\sigma)$, $s\in\mathbb{S}$ and an integer $n$ such that $\sigma^m(s)\neq s$ for $0<m<n$ and $\sigma^n(s)=s$, then $(\{\sigma^i(s)|0\leq i<n\},\sigma)$ is an $n$-cyclic subcounter of $(\mathbb{S},\sigma)$.
\label{theorem_counter_generator}
\end{theorem}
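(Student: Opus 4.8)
The plan is to verify the two defining requirements separately: first that $(\mathbb{T},\sigma)$ with $\mathbb{T}=\{\sigma^i(s)\mid 0\leq i<n\}$ is a genuine subcounter, and then that it is $n$-cyclic. Throughout I would take $s$ itself as the candidate generator $s_0$, since the two hypotheses on $s$ are precisely tailored to make it one.

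First I would establish closure, which is what promotes $(\mathbb{T},\sigma)$ from an arbitrary pair to a counter; the inclusion $\mathbb{T}\subseteq\mathbb{S}$ holds automatically, since every $\sigma^i(s)\in\mathbb{S}$ by the closure axiom~\eqref{completeness_axiom} of the ambient counter. Given $t=\sigma^i(s)\in\mathbb{T}$ with $0\leq i<n$, I split into cases: if $i<n-1$ then $\sigma(t)=\sigma^{i+1}(s)\in\mathbb{T}$ directly, while if $i=n-1$ then $\sigma(t)=\sigma^n(s)=s=\sigma^0(s)\in\mathbb{T}$ by the hypothesis $\sigma^n(s)=s$. This wrap-around argument is structurally identical to the closure step already used in Lemma~\ref{lemma_cycle_full}(2).

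Next I would show $|\mathbb{T}|=n$ by proving the listed elements are distinct, and this is the one point needing real care. Because $\sigma$ is only a function on $\mathbb{S}$ and is not assumed invertible, I cannot cancel $\sigma$ backwards. Instead, if $\sigma^i(s)=\sigma^j(s)$ with $0\leq i<j<n$, I apply $\sigma^{n-j}$ to both sides and use $\sigma^n(s)=s$ to obtain $s=\sigma^{n-(j-i)}(s)$ with $0<n-(j-i)<n$, contradicting the minimality hypothesis $\sigma^m(s)\neq s$ for $0<m<n$. Hence all $n$ values are distinct and $|\mathbb{T}|=n$.

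Finally, to confirm $n$-cyclicity with generator $s_0=s$, I need both $\mathbb{T}=\{\sigma^i(s)\mid i\geq 0\}$ and some $m>0$ with $\sigma^m(s)=s$; the latter is immediate with $m=n$. For the former, writing any $i\geq 0$ as $i=qn+r$ with $0\leq r<n$, repeated use of $\sigma^n(s)=s$ collapses $\sigma^i(s)$ to $\sigma^r(s)\in\mathbb{T}$, so the full forward orbit coincides with $\mathbb{T}$. I expect the distinctness step to be the only genuine obstacle, precisely because non-invertibility forbids the naive backwards cancellation; everything else is routine bookkeeping, and the statement ultimately reads as a mild local version of Lemma~\ref{lemma_cycle_full}.
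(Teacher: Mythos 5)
Your proposal is correct and follows essentially the same route as the paper's own proof: the same case-split wrap-around argument for closure, the same use of $\sigma^n(s)=s$ to forward-apply $\sigma^{n-j}$ and contradict the minimality hypothesis (the paper phrases this contrapositively via a repeated state, but the computation $s=\sigma^{n+i-j}(s)$ is identical), and the same division-with-remainder collapse of the forward orbit. Your explicit observation that non-invertibility of $\sigma$ is what forces the forward-application trick is a nice touch, but it does not change the argument.
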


\begin{proof}
Let $\mathbb{T}=\{\sigma^i(s)|0\leq i<n\}$. First we show that $(\mathbb{T},\sigma)$, $s\in\mathbb{S}$ is a subcounter of $(\mathbb{S},\sigma)$. Consider any element $t\in\mathbb{T}$. $t=\sigma^k(s_0)$ for some $0\leq k<n$. $\sigma(t)=\sigma(\sigma^k(s))=\sigma^{k+1}(s)$. If $k<n-1$ then $k+1<n$, so $\sigma(s)\in\mathbb{T}$. If $k=n-1$, $\sigma(t)=\sigma^n(s)=\sigma(s)=\sigma^0(s)\in\mathbb{T}$, so equation~\ref{completeness_axiom} is satisfied, making $\mathbb{T}$ a counter. It is a subcounter of $(\mathbb{S},\sigma)$ because $\mathbb{T}\subseteq\mathbb{S}$.

We need to show that the cardinality of $|\mathbb{T}|=n$. Clearly $|\mathbb{T}|\leq n$ by the construction. If $|\mathbb{T}|<n$, then there would be at least one repeated state, so $\exists 0\leq k,l<n, k\neq l, \sigma^k(s)=\sigma(l)$. Assume w.l.o.g that $k<l$. Now $s=\sigma^n(s)=\sigma^{n-l}(\sigma^l(s))=\sigma^{n-l}(\sigma^k(s))=\sigma^{n+k-l}s$. $0<n+k-l<n$ which $\sigma^m(s)\neq s$ for $0<m<n$, so  $|\mathbb{T}|=n$.

Consider any $\sigma^i(s),i\geq 0$. Let $j=i\text{ mod }n$, so $\exists x,i=j+nx$. Now $\sigma^i(s)=\sigma^{i-nx}\sigma^{nx}(s)=\sigma^j(s)$. So $\{\sigma^i(s)|0\leq i<n\}=\{\sigma^i(s)|i\geq 0\}$.

Putting this together, $|\mathbb{T}|=n$, $\exists s\in\mathbb{T}, \mathbb{T}=\{\sigma^i(s)|i\geq 0\}$, and $\sigma^n(s)=s$, so $(\mathbb{T},\sigma)$ is an $n$-cyclic subcounter of $(\mathbb{S},\sigma)$.

\end{proof}

\begin{remark}
It is worth noting (although we do not use or prove the result) that any element $s$ of an $n$-cyclic counter $(\mathbb{S},\sigma)$can be used to generate the counter in this way.
\end{remark}

\begin{corollary}
All counters have an $n$-cyclic subcounter for some $n$.
\label{corollary_cyclic_sub}
\end{corollary}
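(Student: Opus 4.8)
The plan is to locate a single element lying on a cycle and then feed it to the Generator theorem (Theorem~\ref{theorem_counter_generator}), which already packages the construction of an $n$-cyclic subcounter from such an element. The only ingredient beyond the closure axiom is the finiteness of $\mathbb{S}$, which I will use exactly once, to force a forward orbit to repeat.

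First I would fix an arbitrary $s\in\mathbb{S}$ and look at its forward orbit $s,\sigma(s),\sigma^2(s),\dots$. Closure (equation~\ref{completeness_axiom}) keeps every iterate inside $\mathbb{S}$, and because $|\mathbb{S}|$ is finite the pigeonhole principle applied to the first $|\mathbb{S}|+1$ iterates yields indices $0\le i<j$ with $\sigma^i(s)=\sigma^j(s)$.

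Next I would set $t=\sigma^i(s)$ and $p=j-i>0$, and compute $\sigma^p(t)=\sigma^{j-i}(\sigma^i(s))=\sigma^j(s)=\sigma^i(s)=t$. Thus $t$ is a periodic point: it returns to itself after $p$ steps. This step is the heart of the argument and the sole place finiteness enters; note that it works even though $\sigma$ need not be injective, since the tail $s,\dots,\sigma^{i-1}(s)$ leading into the cycle is simply discarded.

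Finally I would let $n$ be the least positive integer with $\sigma^n(t)=t$, which exists because $p$ is one such integer. Minimality of $n$ gives $\sigma^m(t)\neq t$ for $0<m<n$, so $t$ and $n$ satisfy exactly the hypotheses of Theorem~\ref{theorem_counter_generator}. That theorem then delivers $(\{\sigma^i(t)\mid 0\le i<n\},\sigma)$ as an $n$-cyclic subcounter of $(\mathbb{S},\sigma)$, which is what we wanted. There is no genuinely hard step here: once finiteness supplies a periodic point, the Generator theorem does the remaining work.
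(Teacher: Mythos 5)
Your proposal is correct and follows essentially the same route as the paper: use finiteness and the pigeonhole principle to find a repetition $\sigma^i(s)=\sigma^j(s)$, identify the periodic point $t=\sigma^i(s)$, and hand it to Theorem~\ref{theorem_counter_generator}. The only cosmetic difference is that you obtain the minimality hypothesis by taking the least period of $t$ via well-ordering, whereas the paper builds it into the pigeonhole step by choosing $j$ as the earliest repetition; both deliver the same hypotheses to the Generator theorem.
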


\begin{proof}
This can be done by construction. Consider any counter $(\mathbb{S},\sigma)$. $\mathbb{S}$ is defined to be finite, so let $m$ be the cardinality of $\mathbb{S}$. Now pick any element $s_0\in\mathbb{S}$ and define a sequence $<\sigma^0(s_0),\sigma^1(s_0),\sigma^2(s_0), \ldots ,\sigma^{m-1}(s_0),\sigma^m(s_0)>$. Because of equation~\ref{completeness_axiom}, all members of this sequence are elements of $\mathbb{S}$, and the sequence has $m+1$ members. Therefore there must be at least one value that occurs more than once in this sequence, so we can find two integers $i$ and $j$ such that $\sigma^i(s_0)=\sigma^j(s_0)$ where w.l.o.g. $i<j$ and for all $k$ such that $i<k<j, \sigma^j(s_0)\neq\sigma^k(s_0)$ (if there was such a $k$ we could choose $k$ instead of $j$ for the second value).

Let $\mathbb{T}=\{\sigma^k(s_0)|i\leq k<j\}$. $\{\sigma^k(s_0)|i\leq k<j\}=\{\sigma^k(\sigma^i(s_0))|0\leq k<j-i\}$, so by theorem\ref{theorem_counter_generator}, $(\mathbb{T},\sigma)$ is an $n$-cyclic subcounter of $(\mathbb{S},\sigma)$ where $n=j-i$.
\end{proof}

\begin{theorem}
All counters have a sub-counter that is isomorphic to $\mathbf{C}_n$ for some $n$.
\label{counter_central_theorem}
\end{theorem}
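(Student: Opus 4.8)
The plan is to chain together the two principal results established immediately above, since this theorem is essentially their composition. First I would invoke Corollary~\ref{corollary_cyclic_sub}, which guarantees that any counter $(\mathbb{S},\sigma)$ possesses an $n$-cyclic subcounter $(\mathbb{T},\sigma)$ for some $n$. All the real work for that step was already done by the pigeonhole argument on the orbit $\langle\sigma^0(s_0),\sigma^1(s_0),\ldots\rangle$ of an arbitrarily chosen starting state, so here I would simply apply the corollary to whatever counter is given.

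Second, I would apply Lemma~\ref{lemma_iso}, which classifies $n$-cyclic counters: a counter is $n$-cyclic if and only if it is isomorphic to $\mathbf{C}_n$. Since the subcounter $(\mathbb{T},\sigma)$ produced in the first step is $n$-cyclic, the forward direction of the lemma yields immediately that $(\mathbb{T},\sigma)\cong\mathbf{C}_n$. Combining the two steps, $(\mathbb{T},\sigma)$ is a subcounter of $(\mathbb{S},\sigma)$ that is isomorphic to $\mathbf{C}_n$, which is exactly the assertion of the theorem.

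There is no substantial obstacle here; the theorem is a corollary-style consequence of an existence result (Corollary~\ref{corollary_cyclic_sub}) and a classification result (Lemma~\ref{lemma_iso}). The only point requiring a moment's care is to confirm that the integer $n$ delivered by the existence step is the same $n$ appearing in the isomorphism step. This holds by construction: the subcounter $\mathbb{T}$ has cardinality $n$, and the explicit isomorphism $g:\mathbb{Z}_n\mapsto\mathbb{T}$ given in Lemma~\ref{lemma_iso} is built directly from that cyclic structure via $g(i)=\sigma^i(s_0)$. I would therefore conclude simply by stating that every counter has a subcounter isomorphic to $\mathbf{C}_n$ for some $n$, which is what ties this new family of counters to the well-understood modular counters $\mathbf{C}_n$.
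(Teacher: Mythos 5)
Your proposal is correct and follows exactly the paper's own argument: the paper proves this theorem by citing Corollary~\ref{corollary_cyclic_sub} together with Lemma~\ref{lemma_iso}, which is precisely the two-step composition you describe. Your version simply spells out the chaining (and the consistency of the $n$) in more detail than the paper's one-line proof.
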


\begin{proof}
This follows directly from lemma~\ref{lemma_iso} and corollary~\ref{corollary_cyclic_sub}.
\end{proof}

% TODO - fix this
% Given Theorem~\ref{lemma_iso} it is useful to define the concept of a maximum cycle.
% 
% \begin{definition}
% Given a counter $(\mathbb{S},\sigma)$, a maximum cycle is a subcounter isomorphic to $\mathbf{C}_n$ for the largest possible $n$.
% \end{definition}

\begin{definition}
Given two counters  $(\mathbb{T},\tau)$, $(\mathbb{S},\sigma)$ and an element $s_0\in \mathbb{S}$, we define $(\mathbb{T},\tau)\oplus_{s_0}(\mathbb{S},\sigma)$ as the structure $(\mathbb{T}\otimes\mathbb{S},\tau\oplus_{s_0}\sigma)$ where

\begin{equation}
\tau\oplus_{s_0}\sigma((t,s))=
\begin{cases}
(\tau(t),\sigma(s)) & s=s_0 \\
(t,\sigma(s)) & s\neq s_0 \\
\end{cases}
\end{equation}
\end{definition}

$(\mathbb{T},\tau)\oplus_{s_0}(\mathbb{S},\sigma)$ is clearly a counter, as equation~\ref{completeness_axiom} is satisfied trivially.

\begin{theorem}
If $(\mathbb{T},\tau)$ is an $n$-cyclic counter and $(\mathbb{S},\sigma)$ is an $m$-cyclic counter, then for any $(\mathbb{T},\tau)\oplus_{s_0}(\mathbb{S},\sigma)$ is an $mn$-cyclic counter.
\label{counter_concat}
\end{theorem}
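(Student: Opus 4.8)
My plan is to reduce the statement to the Generator theorem (Theorem~\ref{theorem_counter_generator}) applied to a single well-chosen starting state, and then close the argument with a cardinality count. Writing $\rho=\tau\oplus_{s_0}\sigma$ for the combined increment, I would pick the generator candidate $(t_0,s_0)\in\mathbb{T}\otimes\mathbb{S}$ for arbitrary $t_0\in\mathbb{T}$, where $s_0$ is exactly the distinguished ``carry'' state used in the $\oplus_{s_0}$ construction. If I can show that $\rho^{mn}(t_0,s_0)=(t_0,s_0)$ while $\rho^k(t_0,s_0)\neq(t_0,s_0)$ for all $0<k<mn$, then Theorem~\ref{theorem_counter_generator} immediately gives that $(\{\rho^i(t_0,s_0)\mid 0\le i<mn\},\rho)$ is an $mn$-cyclic subcounter. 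Since $|\mathbb{T}\otimes\mathbb{S}|=|\mathbb{T}|\cdot|\mathbb{S}|=nm$ and this subcounter already contains $mn$ distinct elements, the subcounter must be the whole structure, so the combined counter is itself $mn$-cyclic.

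The heart of the argument is an explicit formula for iterates of $\rho$. I would first observe that the two coordinates decouple in a controlled way: $\rho$ applies $\sigma$ to the second coordinate unconditionally, so after $k$ steps the second coordinate is simply $\sigma^k(s_0)$. The first coordinate, by contrast, is advanced by $\tau$ only on those steps where the current second coordinate equals $s_0$. Because $(\mathbb{S},\sigma)$ is $m$-cyclic and (by Lemma~\ref{lemma_cycle_full}, parts 2 and 5, together with the fact that any element generates the cycle) $\sigma^j(s_0)=s_0$ holds exactly when $m\mid j$, the number of $\tau$-advances during the first $k$ steps is $c(k)=\lvert\{\,j: 0\le j<k,\ m\mid j\,\}\rvert$. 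This yields the key identity $\rho^k(t_0,s_0)=\bigl(\tau^{c(k)}(t_0),\,\sigma^k(s_0)\bigr)$, which I would establish by a short induction on $k$, checking the two cases of the definition of $\rho$ at each step.

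With this identity in hand, the return-time analysis is routine. A return $\rho^k(t_0,s_0)=(t_0,s_0)$ forces $\sigma^k(s_0)=s_0$, hence $m\mid k$; writing $k=am$ gives $c(k)=a$, so the first coordinate returns iff $\tau^{a}(t_0)=t_0$, i.e. $n\mid a$, i.e. $mn\mid k$. Thus the least positive return time is exactly $mn$: if $m\nmid k$ the second coordinates disagree, and if $m\mid k$ but $mn\nmid k$ then $0<c(k)<n$ so the first coordinates disagree by Lemma~\ref{lemma_cycle_full}(2) applied to $(\mathbb{T},\tau)$, while $c(mn)=n$ gives $\tau^{c(mn)}(t_0)=t_0$. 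I expect the only delicate point to be the bookkeeping in the iterate formula, namely counting the ``carries'' correctly and verifying $c(am)=a$; this is precisely the carry-propagation intuition behind the hybrid counter, and once it is nailed down the rest follows mechanically from the earlier lemmas and theorem.
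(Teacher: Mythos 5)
Your proof is correct, but it takes a genuinely different route from the paper's. The paper argues via isomorphism: it maps $(\mathbb{S},\sigma)$ and $(\mathbb{T},\tau)$ back to the canonical counters $\mathbf{C}_m$ and $\mathbf{C}_n$ using Lemma~\ref{lemma_iso}, and then exhibits an explicit map $p((j,i))=jm+i$ from $\mathbf{C}_n\oplus_{m-1}\mathbf{C}_m$ to $\mathbf{C}_{mn}$, i.e.\ it identifies the hybrid counter with positional (radix) notation and checks that $p$ is an isomorphism; the general case is then meant to follow by transporting along $g$ and $h$ (the paper's own write-up in fact trails off before completing this transfer, and it also needs care aligning the carry state, since with $g(i)=\sigma^i(s_0)$ the canonical carry position $m-1$ maps to $\sigma^{m-1}(s_0)$ rather than to $s_0$). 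You instead work directly in the product: the iterate formula $\rho^k(t_0,s_0)=\bigl(\tau^{c(k)}(t_0),\sigma^k(s_0)\bigr)$ with the carry count $c(k)$, the exact-return-time computation showing the minimal period of $(t_0,s_0)$ is $mn$, and then Theorem~\ref{theorem_counter_generator} plus the cardinality identity $|\mathbb{T}\otimes\mathbb{S}|=mn$ to conclude the subcounter exhausts the whole state set. Your approach buys a fully self-contained, elementary argument that handles an arbitrary carry state $s_0$ natively and reuses machinery the paper has already proved (the Generator theorem and Lemma~\ref{lemma_cycle_full}(2),(5)); the paper's approach buys the conceptual payoff that a hybrid counter \emph{is} the positional-notation counter $\mathbf{C}_{mn}$ up to isomorphism, which fits its broader isomorphism framework (Figure~\ref{isomorphism}) but leaves more transfer details to verify. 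Both are sound; yours is arguably the more complete proof as written.
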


\begin{proof}
We construct a mapping in $g:\mathbf{C}_m\mapsto\mathbb{S}$ as $g(i)=\sigma^i(s_0)$ and $h:\mathbf{C}_n\mapsto\mathbb{T}$ as $h(i)=\tau^i(t_0)$ for some $t_0\in\mathbb{T}$. These are isomorphisms from lemma~\ref{lemma_iso}. Now consider $\mathbf{C}_n\oplus_{m-1}\mathbf{C}_m$, and a map $p:\mathbb{Z}_n\otimes\mathbb{Z}_m\mapsto\mathbf{Z}_{mn}$ defined as $p((j,i))=jm+i$. $p((f_n\oplus_{m-1} f_m)((j,m-1))=p((j+1\text{ mod }n,0))=(j+1\text{ mod }n)m=jm+m\text{ mod }mn$, and $\forall 0\leq i<m-1, p((f_n\oplus_{m-1} f_m)((j,i))=p((j,i+1))=jm+i+1=jm+i+1\text{ mod }mn$ (since $jm+i+1<mn$), so $\forall 0\leq i<m, p((f_n\oplus_{m-1}f_m)((j,i))=p((j,i+1))=jm+i+1\text{ mod }mn=p((j,i))+1\text{ mod }mn$. Hence $p$ is a homomorphism from $\mathbf{C}_n\oplus_{m-1}\mathbf{C}_m$ to $\mathbf{C}_{mn}$. It is also clearly a bijection and a surjection, hence an isomorphism. ... % TODO: finish this
\end{proof}

There is one operation other than increment that may be performed on a Program Counter - the \verb+LOAD+ operation (\verb+RESET+ is simply a special case of this). There are two forms of \verb+LOAD+ that are currently used:

A program might perform an absolute jump (or \verb+LOAD+). This simply loads the counter with a new value that has been calculated when the program is created, and is of no difficulty for any sort of counter.

The other form, which is worth examining in some detail, is where the new address is calculated as some offset $b$ from the current one by `adding' it to the current value of the program counter. Our system doesn't have addition defined, so we define one:

\begin{definition}
Given an counter $(\mathbb{S},\sigma)$, $s\in\mathbb{S}$ and an integer $b\geq 0$ then $s+b=\sigma^b(s)$.
\end{definition}

In the case of $\mathbf{C}_n$, $s+b=f_n^b(s)=s\text{ `plus' }b\text{ mod }n$, where `plus' is the usual defintion of modulo addition.

\section{Radix-2 Counters}
\label{radix2_counters}

Synchronous radix-2 counters are conventionally used for Program Counters,
however there is a trade-off between speed and size because of carry propagation
from low-order to higher-order bits~\cite{stan1998laf}.

The simplest radix-2 counter is a ripple-carry counter based on an
adder~\cite{stan1998laf}. This is slow ($O(N)$ combinatorial delay with
increasing bit-width $N$) but cheap in logic ($O(N)$). Xilinx FPGAs use
carry-chains to provide the logic for this. This can be improved to $O(\log N)$
combinatorial delay using a carry-lookahead design~\cite{parhami2000computer} at
the expense of extra logic.

% TODO: Grammar error `increment in \ldots'
An approach that allows increment in constant time is a redundant
format~\cite{parhami-systolic} which allows what is called `carry-free' addition,
but that has two issues. A redundant representation needs twice as many latches
and therefore consumes more FPGA resources. Another problem is that a redundant
output is unsuitable for providing an address to access instructions. The output
first needs to be converted to a non-redundant format during each clock cycle,
which is going to require $O(\log N)$ propagation delay and extra logic for an
adder. Some work has been done using hybrid redundant number
systems~\cite{phatak1994hybrid} to reduce the number of extra registers, but
there is still a substantial propagation delay converting that to a usable
format.

% Note, we don't have the vuillemin1991constant paper, but several sources
% reference it - use parhami2000computer instead.
Another approach to designing an $O(1)$ delay counter is to use a
cascade~\cite{parhami2000computer} that begins with a short and fast counter, and
continues with longer counters that only need to be incremented occasionally and
don't need to be as fast. However, this requires the slower counters to have
their increments to be precomputed~\cite{stan1998laf} which makes the \verb+LOAD+
operation much more difficult.

\subsection{Cyclic Sequence Generators}

% A cyclic sequence generator is a set of states $\mathbf{S}$ and a function
% $s:\mathbf{S}\mapsto \mathbf{S}$ such that for some initial state
% $S_0\in\mathbf{S}$ and some positive integer $m$, $s^m(S_0)=S_0$ and $\forall
% i:1\leq i<m, s^i(S_0) \neq S_0$ where $m$ is the cycle size. For the purpose of
% this paper, $\mathbf{S}$ is the set of states a sequence of bits can have, and
% it is useful to have $s$ as simple as possible while $m$ is as large as
% possible.

A cyclic sequence generator is a synchronous circuit that iterates through a
cyclic sequence of states. For a program counter, it is desirable to have a
simple circuit and a cyclic sequence that includes most of the possible states. A
good candidate for this is a maximum cycle feedback shift register.

A Feedback Shift Register (FSR) is a shift register that satisfies the condition
that the current state is generated by a linear function of its previous state.
There are many types of these, but the ones considered here are linear (use XOR
gates), and have a constant maximum combinatorial path, therefore constant time
performance ($O(1)$) with increasing bit width $N$ (see Figure~\ref{fsr8_figure}
for some examples of these). This compares favourably with radix-2 counters as
described in Section~\ref{radix2_counters}. A well-known example of an FSR is a
Linear Feedback Shift Register~\cite{golomb1981srs,menezes1997ac} (LFSR) that are
widely used in cryptography~\cite{menezes1997ac}, communications
systems~\cite{pickholtz1982tss} and for built-in self-test
systems~\cite{agrawal1993tbs}. A few of the types of FSR are:

\newcolumntype{V}{>{\centering\arraybackslash} m{.85\linewidth} }

\begin{figure}
\begin{center}
\begin{tabular}{c V}
(a) & \includegraphics[width=0.9\linewidth]{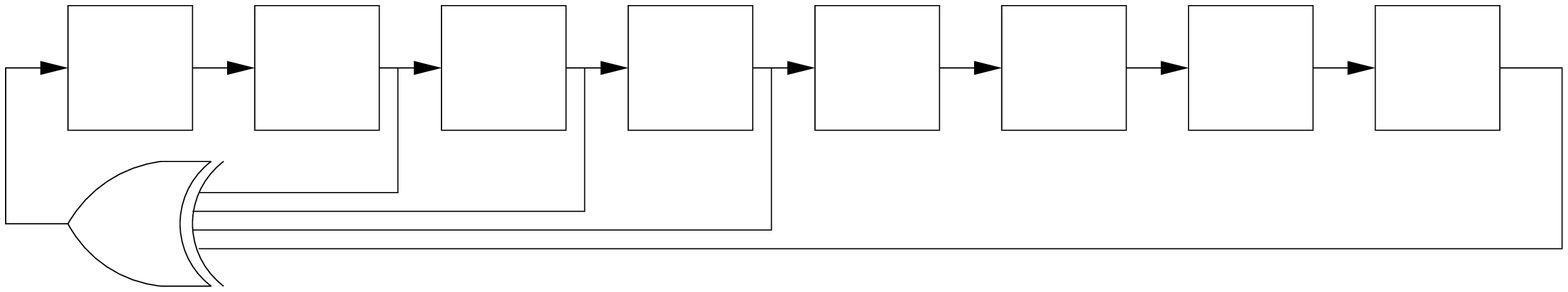} \\
(b) & \includegraphics[width=\linewidth]{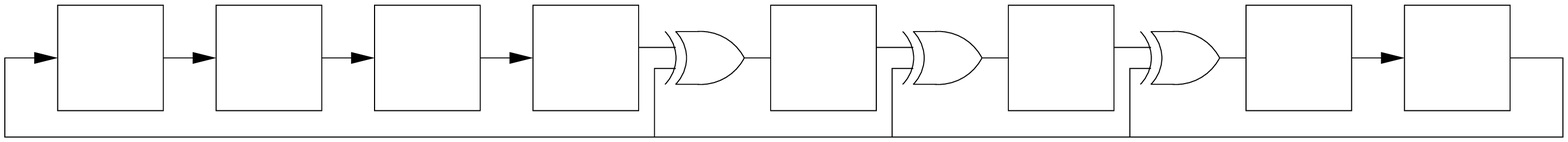} \\
(c) & \includegraphics[width=\linewidth]{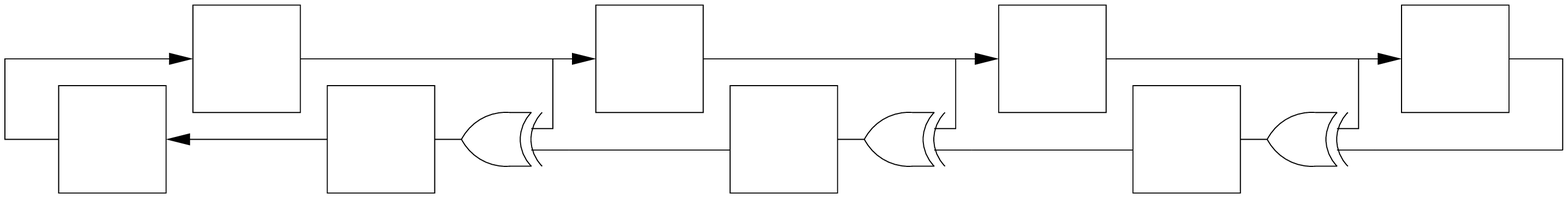} \\
(d) & \includegraphics[width=\linewidth]{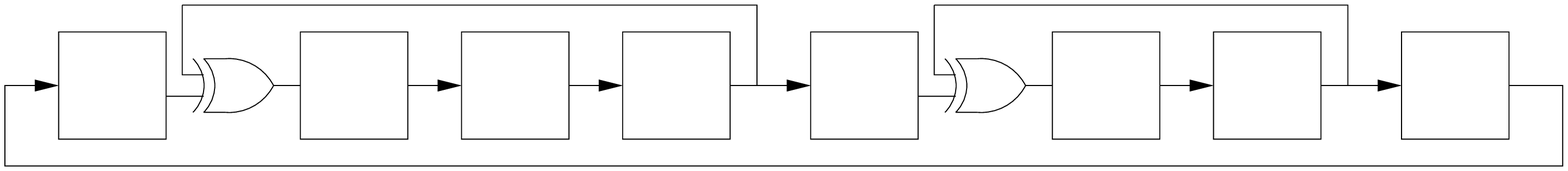} \\
\end{tabular}
\caption{8 bit counters with a cycle size of 255. (a) Fibonacci LFSR, (b)
Galois LFSR, (c) Ring Generator, (d) MFSR}
\label{fsr8_figure}
\end{center}
\end{figure}

\begin{description}
\item[Fibonacci LFSR] takes the output of several of the registers and XORs them together to feed to the input of the first register as shown in Figure~\ref{fsr8_figure}a.
\item[Galois LFSR] takes the output of the last register and XORs it with several of the register inputs as shown in Figure~\ref{fsr8_figure}b. 
\item[Ring Generators~\cite{high_performance_ring_generators}] rearrange the shift register into a ring, and arranges the feedback connections so that they only involve a small amount of routing, as shown in Figure~\ref{fsr8_figure}c.
\item[MFSR~\cite{MFSR_List}] (Multiple Feedback Shift Register) are a generalisation of ring generators, allowing any output to be XORed with any input, but limits the fan-in and fan-out to 2, as shown in Figure~\ref{fsr8_figure}d.
\end{description}

Also worth considering are cellular automata~\cite{cellular_automata}. These are
not Feedback Shift Registers but have similar properties. A cellular automaton
has each bit set from the XOR of the previous value of that bit and neighbouring
bits, which require more XOR gates than FSRs but keeps the routing very local.

An $N$-stage linear FSR is {\em maximum-cycle} when all  $2^N\!-\!1$ non-zero
states occur as the FSR is iterated. Note that a cycle of $2^N$ cannot be
achieved as the all zero state will always map to the all zero state. Any linear
FSR can be represented as an $N\times N$ matrix $\mathbf{M}$ over the field
$GF(2)$, and this will be maximum cycle if and only of the characteristic
polynomial $p(x)=|M-xI|$ is primitive. Lists of primitive characteristic
polynomials and counters with maximum cycles can easily be
found~\cite{menezes1997ac,xilinx052,ward_mfsr_table09} or generated.

All these structures are very fast, as described above, and have a relatively
small amount of logic ($O(N)$ which will mostly be the registers to store the
bits), and a pseudo-random sequence. They will have similar performance, due to
all having a maximum combinatorial part of just one XOR gate (In Fibonacci this
may be 4 gates, but this is still only one LUT on an FPGA). Slight variations in
performance may depend on required LUTs and routing in any particular FPGA
technology choice and application. The worst case number of XOR gates, fan in and
fan out is shown in Table~\ref{fsr_table}. As the sequence is pseudo-random, the
bits may also reordered to search for small routing improvements.

FSR counters meet the counter requirements for PC circuit described in
Section~\ref{PCs}. They are easily loadable, support an enable, and the register
can be read directly. One disadvantage to using FSRs is that the maximal cycle
size is $2^n-1$, instead of the $2^n$ cycle of radix-2 counters. We refer to the
address not generated in the $2^n-1$ cycle as the ``zero address''. While the
zero address does not represent a significant fraction of the address space, this
can be addressed with extra logic as used by Wang and
McCluskey~\cite{wang1988hdg}, but this brings the propagation delay back to
$O(\log N)$.

For the rest of this work we use MFSR counters. They have good fan-in, fan-out,
and a low gate count. Other FSRs could be used and would have very similar
results (see Figure~\ref{fsr8_figure}).

\begin{table}
\begin{center}
\begin{tabular}{r|ccc}
Structure & XOR gates & Fan-in & Fan-out \\
\hline
Fibonacci LFSR & 1 & 4 & 2 \\
Galois LFSR & 3 & 2 & 4 \\
Ring Generator & 3 & 2 & 2 \\
MFSR & 2 & 2 & 2 \\
Cellular Automata & $N$ & 3 & 3 \\
\end{tabular}
\caption{Worst case XOR gate count, Fan-in and Fan-out for several types of
Feedback Shift Register with $N$ bits}
\label{fsr_table}
\end{center}
\end{table}

\subsection{Hybrid PCs}
\label{PC_HYBRID}

The instruction fetch order of FSR PCs may lead to poor run-time performance for
processor designs that contain an instruction cache. It is desirable to have a PC
that increments through all of the instructions within a cache line before
fetching a new cache line. For a cache line-size of 32~bytes, and a fixed
instruction-width of 32~bits, there will be eight instructions within a cache
line. Ideally, and in the absence of branching instructions, each of these eight
instructions should be fetched from the cache before fetching another cache line
from system memory.

The solution presented here is to combine two counters into one PC, a radix-2
counter for the three least-significant bits and a MFSR for the most-significant
bits. Since the Spartan-3 contains four-input Look-Up Tables (LUTs), and the
Virtex-5 has six input LUTs, a 3-bit radix-2 counter can be built with just one
layer of logic. When the upper count value is reached, the MFSR portion of the PC
is then incremented.

\section{FPGA Synthesis of Simple Counter Circuits}

Performance of the synthesised radix-2 and FSR counters are shown in
Figure~\ref{GRAPH_COUNTERS}. As the bit-width $N$ increases radix-2 counters show
a linear increase in latency, this is $O(N)$ time-complexity. The results show
that for radix-2 counters of more than 6~bits, latency can be estimated as
$2.9+0.064 \times N$~ns. The FSR counters have $O(1)$ time complexity and a
smaller constant of only $1.8$~ns, compared with the radix-2 counters.

\begin{figure}[ht!]
\begin{center}
\includegraphics[width=\linewidth]{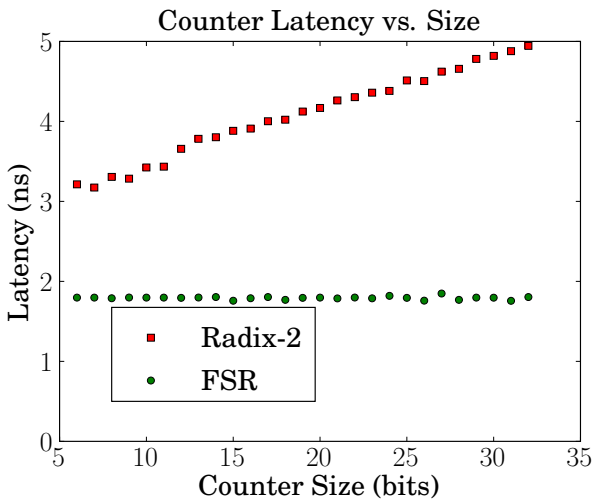}
\caption[Counter performance vs. counter size]{Counter
circuit performance vs. counter size when synthesised for a Spartan-3 FPGA.
The radix-2 counter implementation used was the default generated by the
Xilinx ISE 9.2 tools. This implementation uses the hardware carry-chain
present in Spartan-3 devices.}
\label{GRAPH_COUNTERS}
\end{center}
\end{figure}

%%%%%%%%%%%%%%%%%%%%%%%%%%%%%%%%%%%%%%%%%%%%%%%%%%%%%%%%%%%%%%%%%%%%%%%%%%%%%
%  PC
%
\section{FPGA Synthesis of Complete PC Circuits}

A simple PC circuit was used when investigating the effect of counter-type on PC
performance. Figure~\ref{PC_CIRCUIT} is a block diagram of the circuit used for
testing and the counter used was one of FSR, radix-2, or a hybrid where the
lowest 3~bits are radix-2. Bit widths ranging from 8 to 32~bits were used.

\begin{figure}[ht!]
\begin{center}
\includegraphics[width=\linewidth]{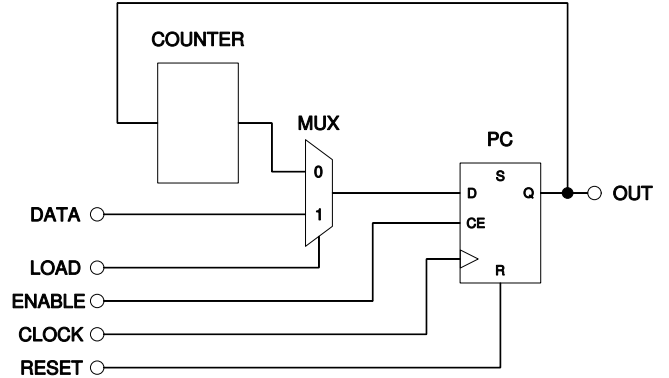}
\caption[Counter performance vs. counter size]{Block diagram of the PC
circuit used for the counter comparison tests. The counter type was either a
FSR, radix-2, or a FSR-radix-2 hybrid.}
\label{PC_CIRCUIT}
\end{center}
\end{figure}

\subsection{PC Circuit Results}

The results of the PC circuit synthesis are shown in Figure~\ref{GRAPH_PC}. As
with the previous counter tests, the radix-2 circuits scale linearly with
increasing bit-width. Again though, the latencies of FSR-based circuits are lower
than radix-2 and both the FSR and hybrid PCs have $O(1)$ scaling with increasing
bit-width. Due to the Xilinx synthesiser using the carry-chain logic for the
radix-2 counters, and FSRs having a gate depth of just one, this is as expected.

To demonstrate that the behaviour observed when synthesising for a Spartan-3 FPGA
is not unique, Figure~\ref{GRAPH_PC_VIRTEX} also shows synthesis results for a
Xilinx Virtex-5 FPGA. The latency scaling with increasing bit-width is similar
but the Virtex-5 is clearly a faster architecture.

\begin{figure}[ht!]
\begin{center}
\includegraphics[width=\linewidth]{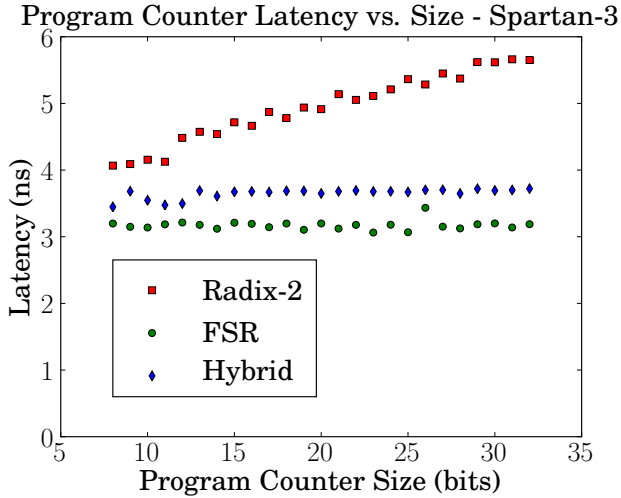}
\caption[Spartan-3 synthesis results of PC performance vs. counter size]{Xilinx
Spartan-3 FPGA synthesis results of PC circuit performance vs.
counter size. PCs with three counter types are compared: FSR, radix-2, and a
FSR-radix-2 hybrid (Lower 3~bits radix-2, upper bits FSR).}
\label{GRAPH_PC}
\end{center}
\end{figure}

\begin{figure}[ht!]
\begin{center}
\includegraphics[width=\linewidth]{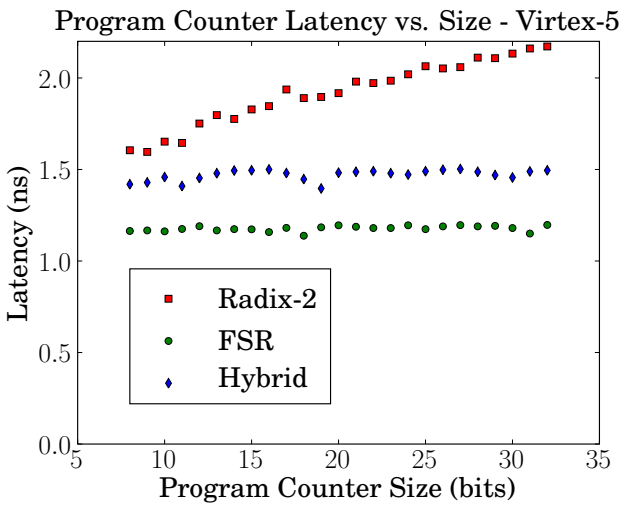}
\caption[Hybrid PC performance vs. counter size]{Xilinx Virtex-5 FPGA
synthesis results show that these devices are faster than Spartan-3 FPGAs.
The latency is lower but the comparative differences between counter types is
still very similar. The Virtex-5 FPGA is from the Xilinx performance range of
FPGAs. The Virtex-5 was used to demonstrate that the Spartan-3 results can
apply to other FPGAs.}
\label{GRAPH_PC_VIRTEX}
\end{center}
\end{figure}

%%%%%%%%%%%%%%%%%%%%%%%%%%%%%%%%%%%%%%%%%%%%%%%%%%%%%%%%%%%%%%%%%%%%%%%%%%%%%
%  CPU
%

\section{Processor Design Examples}

Three FPGA processors were synthesised and evaluated to show the effects of
different PC circuits on maximum clock frequency. The three processor logic cores
used are aeMB, TTA16, and RISC16. aeMB was designed to use a conventional radix-2
PC whereas RISC16 and TTA16 were designed to use a PC circuit based on either a
FSR or radix-2 counter.

This is to examine if substituting a FSR-based PC into an existing processor
design leads to any performance gains of the processor as a whole, and if there
is a detrimental effect of adding a radix-2-based PC to a processor designed for
a FSR-based PC. A Xilinx Spartan-3 FPGA was again the synthesis target and Xilinx
ISE 9.2 was the synthesis tool.

\subsection[aeMB]{aeMB: A MicroBlaze Compatible RISC Processor}

The aeMB processor logic core is 32-bit, Harvard architecture, RISC processor
with a three-stage pipeline. It is an open source project that was designed to be
instruction compatible with the MicroBlaze core~\cite{xilinx2008mpr}. When the
aeMB processor core was synthesised for a Spartan-3 FPGA it used about 2600
logic elements.

% TODO: Instruction cache ``look-up'' OR ``sense''?
The critical path of aeMB, as determined from the Xilinx place-and-route timing
report, is the instruction-cache look-up path. Total routing resources used were
515643 paths with the original radix-2 PC. The source code was modified,
substituting a FSR PC for the radix-2 PC, and resynthesised. This was a straight
PC-logic substitution and did not address any potential problems with relative
branching and cache coherency.

Maximum operating frequency increased slightly, as is shown in
Table~\ref{CPU_Table}, logic resource utilisation was similar, but the FSR degign
used only 503608 paths. This is a large change in pathing resources used for a
small change to the total degign. This large difference in pathing resources,
caused only by the change to the PC circuit, was not observed with the other
processor cores.

\begin{table}[ht!]
\begin{center}
\begin{tabular}{l | l l l l}
CPU	&	PC Size	& FSR	& Radix-2	& Pipeline	\\
(Name)	&	(Bits)	& (MHz)	& (MHz)		& Stages	\\
\hline
TTA16	&	10	&	192	&	157	&	3	\\
RISC16	&	10	&	143	&	141	&	5	\\
aeMB	&	30	&	84	&	80	&	3	\\	
\end{tabular}
\caption[Processor performance with a FSR PC vs. radix-2 PC]{Three
different processor designs were synthesised, each with a FSR PC and with a
radix-2 PC. Gains are negligible except when the PC circuit is the critical
path, as with TTA16.}
\label{CPU_Table}
\end{center}
\end{table}

% Radix-2:	latency:	12.504 ns	slices:	1620	paths:	515643	conns:	14160
% FSR:		latency:	11.973 ns	slices:	1631	paths:	503608	conns:	13996

\subsection{TTA16}

TTA16 is a 16-bit, Transport Triggered Architecture~\cite{corporaal:tta} (TTA)
processor optimised for Xilinx Spartan-3 FPGAs. It is an open-source, Harvard
architecture processor and was designed for the high-throughput, data-processing
tasks of the Open Video Graphics Adapter (OpenVGA) project~\cite{OpenVGA}. TTA
processors have very simple instruction word formats and require only very simple
instruction decoders resulting in smaller processor cores. The TTA16 PC circuit
is similar to that shown in Figure~\ref{PC_CIRCUIT} and contains source code for
both types of PC, FSR or radix-2, and can be synthesised with either one.

TTA16 was the processor that showed the greatest frequency improvement with an
FSR PC (see Table~\ref{CPU_Table}), the FSR counter has 22\% greater clock
frequency than with the radix-2 counter. TTA16 configured to use the FSR PC is
substantially faster than with the radix-2 PC because TTA16 was designed to use a
low-latency PC circuit. When TTA16 is synthesised with a radix-2 counter the PC
circuit becomes the critical path limiting maximum frequency. We speculate that
adding an additional pipeline stage to the radix-2 PC circuit may improve maximum
processor frequency, but this would also increase branch latency and FPGA
resources required.

\subsection{RISC16: A Small 16-bit RISC Processor}

RISC16 was also designed for OpenVGA~\cite{OpenVGA}, for comparison with TTA16,
and shares many design elements. The RISC16 core is has five pipeline stages
arranged so that each has similar latency. Decreasing the latency of one
small component, the PC circuit, would not be expected to have a big effect on
overall performance. Table~\ref{CPU_Table} shows that there were no substantial
difference between the FSR and radix-2 PC implementations of the RISC16 processor
core.

\section{Discussion}

Radix-2 counters are the conventional counters used for the processors PC. These
have been studied and used extensively. Memories, including caches, often support
linear, burst transfers assuming a radix-2 count order. Current software tools,
like assemblers and compilers, assume a radix-2 count order as well. FPGAs, like
the Xilinx Spartan-3 and Virtex-5 families, contain carry-chains to support
radix-2 counters as well.

Traditional assemblers will generate object code suitable only for the radix-2 PC
increment sequence. For processors with FSR-based PCs, an assembler is needed
that will use a description of the FSR to correctly re-order the processor
instruction. An assembler was developed for this purpose and reads an XML-encoded
complete processor description prior to generating the assembly output. The
assembler supports FSR and radix-2, as well as hybrid PCs consisting of a
concatenation of an arbitrary number of FSR and radix2 counters. For each FSR
counter used the tap sequence describing the FSR is also encoded within the
processor description file.

\subsection{FSR Program Counters}
% TODO:

The FSR counters are very fast (see Figure~\ref{PC_CIRCUIT}) leading to a
substantially lower-latency PC circuit (see Figure~\ref{GRAPH_COUNTERS}). This in
turn leads to notable gains in maximum processor operating frequency for the
processor with the highest operating frequency, TTA16 (see
Table~\ref{CPU_Table}). Because a maximal-cycle FSR count cycle is different from
a radix-2 counter this has effects on tools, instruction encoding, and cache
coherency.

\subsubsection{Relative Instruction Addressing}

The instruction set of many processors contain branch instructions that store an
offset that is added to the current value of the PC.  These instructions are
called PC-relative branch instructions. When the offset is encoded with a
bit-width narrower than the PC, this branching is difficult to do with FSRs.

PC-relative branching with small offsets is widely used in contemporary
processors~\cite{parhami2005cam} as it allows a subset of addresses, those that
are close to the current instruction, to be encoded within an instruction. The
simplest approach with FSR-based PCs is to not support PC relative branching.
This problem is more easily resolved using Hybrid PCs (see
Section~\ref{HYBRID_CACHE_COHERENCY}).

\subsubsection{Cache Coherency}
\label{MFSR_CACHE_COHERENCY}
% TODO: Trace cache.

For processor designs featuring an instruction cache, a FSR-based PC will lead to
poor performance. A single FSR-based PC circuit will traverse program memory in a
pseudo-random count cycle. Caches are typically designed to fetch multiple words
from sequential adresses, called cache lines. A FSR PC circuit may only execute
one instruction from this cache line, and then the cache may need to fetch
another. A solution to this problem is introduced in Section~\ref{PC_HYBRID} and
then issues are discussed in Section~\ref{HYBRID_CACHE_COHERENCY}.

\subsection{Hybrid Program Counters}
\label{HYBRID_CACHE_COHERENCY}

The synthesis results, shown in Figures~\ref{GRAPH_PC} and~\ref{GRAPH_PC_VIRTEX},
for the hybrid PCs described in Section~\ref{PC_HYBRID} show that performance is
greater than radix-2 PCs at all tested bit-widths. They are therefore a good
solution to the cache coherency problems of FSRs. Figures~\ref{GRAPH_PC}
and~\ref{GRAPH_PC_VIRTEX} show that hybrid PCs have only slightly higher latency
than pure FSR PCs and with the same constant-latency behaviour with increasing
$N$. Hybrid PCs can also be used to solve problems with relative branching and
position independent code, though this is future work.

With a hybrid-PC, the FSR portion will not increment when its value is zero, so
there is one entire cache line which will not be accessed by the count cycle.
This need not be a disadvantage because the first cache line could be used to
store other information, for example the interrupt vector table.

% The hybrid PC solution presented here only adds a small amount of latency. Only
% one design of hybrid PC was tested, with a 3-bit radix-2 counter, for the lower
% bits, and an FSR counter for the upper bits. For a Spartan-3 implementation,
% this only adds about 0.5 ns in latency.
% 

\section{Conclusions}

We have designed PC circuits that can allow some FPGA-based processors to operate
at higher frequencies. This is because FSR-based counters have very low latency,
a depth of just one logic gate, and PC circuits utilising FSRs can have
substantially lower latency when implemented in FPGAs. Due to the constant-time
behaviour, with increasing bit-width $N$, FSR counters have an even greater
advantage, relative to radix-2 counters, when $N$ is large.

For small embedded FPGA processors executing instructions stored in local SRAMs,
the pseudo-random count cycle of FSRs is no significant problem either as long as
the user has the necessary tools to generate code. We have also presented hybrid
PCs to solve the FSR cache-coherency issues for processors that use an
instruction cache to reduce average latency for instruction fetching.

\subsection{Future Work}

There can be many possible maximal-cycle FSRs for a particular bit-width. Some
FSRs, due to each having a slightly different circuit, may be faster for a
specific implementation than other FSRs. Testing was not performed to find the
fastest available FSR circuit for a particular implementation. A future project
might be searching amongst the many possible FSRs to find the lowest latency
circuit.

Modern compilers can generate position independent code making use of relative
branching that is not practicable with a pure FSR-based PC. Further work
exploring hybrid program counters, consisting of three or more smaller counters,
could probably be used to solve the FSR relative addressing problems.

The emphasis of this work is on FPGA-based processors. Due to their very low gate
depth and reduced logic complexity FSR and FSR-radix-2 hybrid PCs may also prove
useful with some very low gate-count, or very high clock frequency processors
realised in Silicon.

\bibliographystyle{plain}
\bibliography{thesis_bib}

\end{document}